\documentclass[reqno,12pt]{amsart}
\usepackage{amssymb,amsmath,amsthm,enumerate,amscd,graphicx,color}
\usepackage[usenames,dvipsnames,svgnames,table]{xcolor}
\usepackage{enumitem}

\usepackage{fullpage}
\usepackage{color}
 \usepackage[colorlinks=true, pdfstartview=FitV, linkcolor=blue, citecolor=blue, urlcolor=blue]{hyperref}
\theoremstyle{plain}

\newtheorem{corollary}{Corollary}[section]

\newtheorem{proposition}{Proposition}[section]

\theoremstyle{remark}

\newtheorem{remark}{Remark}[section]

\numberwithin{equation}{section}

% Blackboard characters
\newcommand{\bC}{{\mathbb C}}

\newcommand{\bZ}{{\mathbb Z}}

% Greek characters

\newcommand{\B}{\mathcal{B} }

\newcommand{\<}{\langle }

% Arrows

\renewcommand{\>}{\rangle}

\begin{document}

\title{Generalized  vertex operators of  Hall-Littlewood polynomials as twists of 
charged free fermions.}
\author{ G.\,Necoechea and N.Rozhkovskaya}

\begin{abstract}
Using  twists of fields of 
charged free fermions  we revise the generalized vertex operator presentation of Hall-Littlewood polynomials, propose  a new version of the  deformed boson-fermion correspondence, and construct 
 new examples of tau-functions  of the KP  hierarchy in  the  one-parameter deformation of the ring of symmetric functions $\Lambda[t]$.
\end{abstract}

\maketitle

\section{Introduction}

Vertex operator  realization of Hall-Littlewood polynomials through  generalized fermions  was first described in \cite{Jing1}, followed by the  construction of a deformed  version of boson-fermion correspondence in  \cite{Jing2}.  The ideas were  further developed and  applied in  works of many authors, such as \cite{Angul}, \cite{B-W}, \cite{D-E}, \cite{FW}, \cite{Sulk}. 

The first  goal of this note is to show that generalized fermions  in  \cite{Jing1} can be interpreted as a  simple twist of 
 charged free fermions that provide vertex operator realization of Schur symmetric functions  described in 
\cite{Jing3}, \cite {Md}, \cite{Zel}.

The deformed boson-fermion correspondence of \cite{Jing2}  establishes connection between the actions of generalized fermions and the twisted Heisenberg algebra.
This correspondence found its applications   in   \cite{FW}, \cite{Sulk}.  Our second goal is to propose a version  of boson-fermion correspondence that  relates  generalized fermions with the action of the classical Heisenberg algebra.  This construction is different from the deformed  boson-fermion correspondence  in \cite{Jing2}  and has its own advantages:  it does not require adjustments in the standard definition  of the normal ordered product of fields; the action of classical Heisenberg algebra and, as a consequence, of the Virasoro algebra are naturally present in the picture; the action of the  twisted Heisenberg algebra  appears as  a renormalization of the action of the classical Heisenberg algebra;   proofs of this version of the boson-fermion correspondence, the original  deformed boson-fermion correspondence of \cite{Jing2},  and its applications become simple implications of a twist of  the  classical boson-fermion correspondence \cite{DJKM83}.
   
The third result of this paper  is the construction of   tau-functions  of the KP hierarchy in the deformed ring of symmetric  functions 
$\Lambda[t]$. In \cite{Md}   symmetric functions  $S_\lambda$ are introduced as  a basis    dual to the basis of classical Schur functions  $s_\lambda$  with respect to the natural scalar form in the  deformed ring of symmetric functions  $\Lambda[t]$. Extending the idea of  twisting  the fields of  charged free fermions  we prove that $S_\lambda$'s are solutions of the bilinear KP identity and provide two different vertex operator realizations of this family of symmetric functions.  Note that this is stronger than  the result of  \cite{FW}  that  proves that  the  element $\Lambda[t]$ of the form  $\sum s_\lambda S_\lambda$  is a  tau-function  of the KP hierarchy.

 In Sections  \ref{s_prelim}, \ref{s_ferm} we review necessary facts about symmetric functions and the action of charged free fermions on the ring of symmetric functions. 
 In  Section \ref{s_twferm} we connect twisted fermions with classical charged free fermions. In Section \ref{s_bf} we propose a version of boson-fermion correspondence for twisted fermions and compare it with the construction
 of \cite{Jing2}. In Section \ref{s_tau} we prove that
   symmetric functions  $S_\lambda$ that  form a basis  of $\Lambda[t]$
  orthogonal to  the basis of Schur functions  $s_\lambda$  are  tau-functions of the KP hierarchy. We provide two different versions of their vertex operator realization.

\section{Preliminaries on symmetric functions}\label{s_prelim}
We  review necessary facts about symmetric functions following  \cite{Md}, \cite{Stan}.
Let $\Lambda= \Lambda [x_1,x_2,\dots] $ be the  ring of symmetric functions in  variables $ (x_1,x_2,\dots)$.
{\it  Schur symmetric   functions }$s_\lambda$  labeled by partitions $\lambda=(\lambda_1\ge \lambda_2\ge \dots\ge \lambda_n\ge 0)$ 
 defined by
 \[
s_\lambda(x_1,x_2,\dots )=\frac {\det[x_i^{\lambda_j+n-j}]}{\det[x_i^{n-j}]}
\]
 form a linear basis of $\Lambda$.
{\it Complete symmetric functions} $h_r= s_{(r)}$  are given by
\begin{align*}
h_r(x_1,x_2\dots)=\sum_{1\le i_1\le \dots \le i_r<\infty} x_{i_1}\dots x_{i_r},
\end{align*}
 {\it elementary symmetric functions} $e_r= s_{(1^r)}$  by
\begin{align*}e_r(x_1,x_2\dots)=\sum_{1< i_1< \dots < i_r<\infty} x_{i_1}\dots x_{i_r},
\end{align*}
 and {\it power sums} $p_k$ by
\[
p_k(x_1,x_2,\dots)=\sum_i x_i^k.
\]
It will be convenient  to set 
$h_{-k}(x_1,x_2\dots)=e_{-k}(x_1,x_2\dots)=p_{-k}(x_1,x_2\dots)=0
$
 for 
$k>0$ and $h_0=e_0=p_0=1$.
Each of these families generate the   ring of  symmetric functions $\Lambda$   is  a polynomial ring:
 \begin{align}
 \Lambda=\bC[h_1, h_2,\dots]=\bC[e_1, e_2,\dots]=\bC[p_1,p_2,\dots].
 \end{align}
 
Based on the interpretation of a polynomial ring as the ring of symmetric functions, one  defines {\it boson Fock space} $\B$. 
Let $\B=\bC[z, z^{-1}, p_1,p_2,\dots ]$ be the graded   space of polynomials 
\[
\B=\oplus_{m\in \bZ} \B^{(m)},\quad  \B^{(m)}= z^m\cdot \bC[ p_1,p_2,\dots ] = z^m \Lambda. 
\]
 
We  write generating  functions  for  complete, elementary symmetric functions and power sums:
 \begin{align}
&H(u)=\sum_{k\ge 0}  \frac{h_k}{u^k},\quad 
E(u)=\sum_{k\ge 0}\frac{ e_k}{ u^k},\quad P(u)=\sum_{k=1}^\infty p_k u^k. \label{e:22}
 \end{align}
 Then 
 \begin{align*}
H(u)= \prod_{i\ge 1} \frac{1}{1-x_i /u},\quad
E(u)=  \prod_{i\ge 1} (1+x_i /u),
 \end{align*}
\begin{align}\label{HE}
H(u) E(-u)=1,
\end{align}
and
\begin{align}\label{HEexp}
&H(u)= exp\left(\sum_{n\ge1} \frac{p_n}{n}\frac{1}{u^n}\right),\quad 
E(u)= exp\left(-\sum_{n\ge1} \frac{(-1)^{n} p_n}{n}\frac{1}{u^n}\right).
\end{align}
{\it Heisenberg algebra} is a complex  Lie algebra generated by  elements $\{\alpha_m\}_{m\in \bZ}$ and central element $1$  with commutation relations 
\begin{align}\label{def_alpha}
[\alpha_k,\alpha_n]=k\delta_{k,-n}\cdot 1.
\end{align}
Combining generators  into formal distribution $\alpha(u)=\sum_{k}\alpha_k{u^{-k-1}}$ we can rewrite this 
relation 
as
\begin{align}\label{heis}
[\alpha(u),\alpha(v)]=\partial_v\delta(u,v),
\end{align}
where $
\delta(u,v)= \sum_{k\in \bZ} {u^k}{v^{-k-1}}
$ is the formal delta distribution.
There is a  natural action of  Heisenberg algebra  on the  graded components $\alpha_n: \B^{(m)}\to  \B^{(m)}$ defined by  multiplication and 
differentiation operators:
\begin{align}\label{heis1}
\begin{cases}
\alpha_{-n}=p_n/n,  &\quad n> 0,\\
\alpha_{n}=\frac{\partial }{\partial p_n}, & \quad n>0,\\
\alpha_0=m. \quad 
\end{cases}
\end{align}

The
ring of symmetric functions $\Lambda$  possesses a natural  scalar product, where  classical Schur functions $s_\lambda$ constitute
  an orthonormal basis:
  \begin{align}\label{scalar}
  \<s_\lambda,s_\mu\>=\delta_{\lambda, \mu}.
  \end{align}
   Then for the  operator of  multiplication by a symmetric function $f$ one can
define the adjoint operator $f^\perp$ acting on the ring of symmetric functions by the standard rule:
$ \<f^\perp g, w\>=\<g,fw\>$, where $g,f,w\in \Lambda$.
We consider generating functions  of the adjoint operators
  \begin{align*}
  E^\perp(u)= \sum_{k\ge 0} {e^\perp_k} u^k,\quad H^\perp(u)= \sum_{k\ge 0} {h^\perp_k} u^k.
  \end{align*}
One can prove  
that 
  \begin{align}\label{DHEexp}
  E^\perp(u)= exp\left(-\sum_{k\ge 1} (-1)^k \frac{\partial}{\partial p_k} u^k\right),\quad\quad 
    H^\perp(u)= exp\left( \sum_{k\ge 1}\frac{\partial}{\partial p_k} u^k\right).
  \end{align}
The following commutation relations  serve as the foundation of most  of calculations in this note.
\begin{proposition}\label{comut1}\cite{Md}
\begin{align*}
\left(
1-\frac{u}{v}
\right)E^\perp(u)E(v)= E(v)E^\perp(u),\\
\left(
1-\frac{u}{v}
\right)H^\perp(u)H(v)= H(v)H^\perp(u),\\
H^\perp(u)E(v)= \left(
1+\frac{u}{v}
\right)E(v)H^\perp(u),\\
E^\perp(u)H(v)= \left(
1+\frac{u}{v}
\right)H(v)E^\perp(u).
\end{align*}
\end{proposition}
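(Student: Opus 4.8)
The plan is to obtain all four identities as instances of one computation, using the exponential presentations \eqref{HEexp} and \eqref{DHEexp} together with the elementary Baker--Campbell--Hausdorff identity: if $[A,B]$ commutes with both $A$ and $B$, then $e^A e^B = e^{[A,B]}e^B e^A$. In each case the two operators are exponentials of linear combinations of the multiplication operators $p_n$ and the derivations $\partial/\partial p_n$, so the commutator $[A,B]$ will be a scalar series in $u/v$ --- in particular central --- and the quoted form of BCH applies. All the series should be read as formal operators: applied to any fixed element of $\B$ only finitely many terms contribute, so the exponentials are well defined and no convergence question arises.

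First I would treat the first identity. By \eqref{DHEexp} and \eqref{HEexp} write $E^\perp(u)=e^A$ and $E(v)=e^B$ with $A=-\sum_{k\ge1}(-1)^k\frac{\partial}{\partial p_k}u^k$ and $B=-\sum_{n\ge1}\frac{(-1)^n p_n}{n}v^{-n}$. Since the only nonzero commutator among the building blocks is $[\partial/\partial p_k,\,p_n]=\delta_{k,n}$, the double sum defining $[A,B]$ collapses to its diagonal:
\begin{align*}
[A,B]=\sum_{k\ge1}(-1)^{k}(-1)^{k}\frac1k\Big(\frac{u}{v}\Big)^k=\sum_{k\ge1}\frac1k\Big(\frac uv\Big)^k=-\log\Big(1-\frac uv\Big).
\end{align*}
Exponentiating gives $e^{[A,B]}=(1-u/v)^{-1}$, whence $E^\perp(u)E(v)=(1-u/v)^{-1}E(v)E^\perp(u)$, which is the first relation after clearing the scalar. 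The second identity is structurally identical with $A=\sum_{k}\frac{\partial}{\partial p_k}u^k$, $B=\sum_n\frac{p_n}{n}v^{-n}$: here both exponents are sign-free, so the diagonal sum is again $\sum_k\frac1k(u/v)^k=-\log(1-u/v)$ and the same factor $(1-u/v)^{-1}$ appears.

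For the two mixed relations exactly one of the two factors carries the alternating sign, so in the diagonal sum a single $(-1)^k$ survives and
\begin{align*}
[A,B]=-\sum_{k\ge1}\frac{(-1)^k}{k}\Big(\frac uv\Big)^k=\sum_{k\ge1}\frac{(-1)^{k-1}}{k}\Big(\frac uv\Big)^k=\log\Big(1+\frac uv\Big),
\end{align*}
so that $e^{[A,B]}=1+u/v$ and relations three and four follow in the same way. The only thing to watch is the bookkeeping of the four sign conventions built into \eqref{HEexp} and \eqref{DHEexp}, which is precisely what routes each commutator to the correct logarithmic series $-\log(1-u/v)$ or $\log(1+u/v)$; once these signs are tracked the four statements are immediate, and there is no substantial obstacle beyond this.
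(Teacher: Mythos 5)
Your proof is correct: the sign bookkeeping in all four cases checks out, and the central-commutator form of BCH applies because each commutator $[A,B]$ is a scalar formal series in $u/v$ (so in particular central), giving precisely the factors $(1-u/v)^{-1}$ and $(1+u/v)$ in the right places. Note that the paper itself offers no proof of this proposition at all: it simply cites Exercise 29 of Chapter I.5 in Macdonald. Your argument is the standard one that this citation points to — the exponential presentations \eqref{HEexp}, \eqref{DHEexp} together with $[\partial/\partial p_k,\,p_n]=\delta_{k,n}$ — so you have in effect supplied the missing proof rather than deviated from the paper's. One small point of precision: your justification that the exponentials are well defined because ``only finitely many terms contribute'' on a fixed vector is accurate for the derivation exponentials $E^\perp(u)$, $H^\perp(u)$, but not for the multiplication exponentials $E(v)$, $H(v)$, where infinitely many terms act nontrivially; the correct formal-series reading is the one given in the paper's remark following the proposition, namely that both sides are equalities of coefficients of $u^k v^{-m}$, $k,m\ge 0$, each such coefficient being a finite sum of operators. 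This does not affect the validity of your computation.
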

\begin{remark}\label{remark}
Statements of Proposition \ref{comut1}  should be understood as equalities of series expansions  in  powers of $u ^kv^{-m}$ for  $k, m\ge 0$. 
\end{remark}

\section{Fermions and Schur symmetric functions}\label{s_ferm}

Following \cite{DJKM83},  \cite{FLM}, \cite{K}, define  the action of algebra of charged free fermions on the boson Fock space.

Let  $R(u): \B\to \B$ act on  elements $z^m f$, $f \in \Lambda$ by the rule
\[
R(u) (z^mf)=\left(\frac{ z}{u}\right)^{m+1} f.
\]
(see e.g.\cite{K}, \cite{Bomb}).
Then  $R^{-1}(u)$ acts as
\[
R^{-1}(u) (z^mf)=\left({ z}\right)^{m-1}{{u ^m}} f.
\]
One should think of  $R^{\pm 1}(u)$ as operators that transport the action of other operators  along the  grading of  the boson Fock space:  $R^{\pm 1}(u): \B^{(m)}\to \B^{(m\pm 1)}$.
We set
\begin{align}
\Phi^+(u)&=uR(u)H(u) E^{\perp}(-u),\label{deff1}\\
\Phi^-(u) &=R^{-1}(u)E(-u) H^{\perp}(u).\label{deff2}
\end{align}
Observe that 
\begin{align*}
\Phi^+(u)|_{\B^{(m)}}&=zu^{-m}H(u) E^{\perp}(-u),\\
\Phi^-(u)|_{\B^{(m)}}&=z^{-1}{u^{m}}E(-u) H^{\perp}(u).
\end{align*}

\begin{proposition}\label{ferm}
Quantum fields $\Phi^\pm (u)$ satisfy the relations of the algebra of charged free fermions:
\begin{align}
\Phi^\pm(u)\Phi^\pm(v)+ \Phi^\pm(v)\Phi^\pm(u)&=0,\label{rel1}
\\
\Phi^+(u)\Phi^-(v)+ \Phi^-(v)\Phi^+(u)&=\delta(u,v).\label{rel2}
\end{align}
Here
$
\delta(u,v)= \sum_{k\in \bZ} \frac{u^k}{v^{k+1}}
$
is formal delta distribution.
\end{proposition}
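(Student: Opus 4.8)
The plan is to verify the three relations directly on a homogeneous element $z^mf\in\B^{(m)}$, $f\in\Lambda$, using the explicit restrictions $\Phi^+(u)|_{\B^{(m)}}=zu^{-m-1}H(u)E^\perp(-u)$ and $\Phi^-(u)|_{\B^{(m)}}=z^{-1}u^{m-1}E(-u)H^\perp(u)$ displayed just after \eqref{deff2}. The crucial organizing observation is that $z^{\pm1}$ and the scalar powers of $u,v$ commute with all four operators $H,E,H^\perp,E^\perp$, since the latter act only on the $\Lambda$-factor; thus composing two vertex operators amounts to moving all the grading factors to the front and then normal-reordering the four symmetric-function operators by means of Proposition~\ref{comut1}. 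The only bookkeeping point is that the first operator applied shifts the grading $m\mapsto m\pm1$, so the exponent carried by the second operator is the shifted one; this is exactly what makes the scalar prefactors of $\Phi(u)\Phi(v)$ and $\Phi(v)\Phi(u)$ differ.

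For \eqref{rel1} with two $\Phi^+$'s, I would compute both orderings on $z^mf$ and use the fourth identity of Proposition~\ref{comut1} (after $u\mapsto-u$) to move $E^\perp(-u)$ to the right of $H(v)$, producing the rational factor $(1-u/v)$. Because $H(u),H(v)$ commute with each other and $E^\perp(-u),E^\perp(-v)$ commute with each other, the resulting operator part $H(u)H(v)E^\perp(-u)E^\perp(-v)$ is symmetric in $u\leftrightarrow v$; a short computation then shows that the two scalar prefactors are exactly opposite, so the sum vanishes. The two-$\Phi^-$ case is identical, using instead the third identity of Proposition~\ref{comut1} (after $v\mapsto-v$) to commute $H^\perp(u)$ past $E(-v)$.

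The heart of the proof is \eqref{rel2}. Computing $\Phi^+(u)\Phi^-(v)$ on $z^mf$ and using the first identity of Proposition~\ref{comut1} (with $u\mapsto-u$, $v\mapsto-v$) to move $E^\perp(-u)$ past $E(-v)$, I would obtain the scalar $u^{-m}v^{m-1}(1-u/v)^{-1}$ times the operator $\Omega(u,v):=H(u)E(-v)E^\perp(-u)H^\perp(v)$. Computing $\Phi^-(v)\Phi^+(u)$ and using the second identity (with $u\leftrightarrow v$) to move $H^\perp(v)$ past $H(u)$, then reordering commuting factors, I would obtain the \emph{same} operator $\Omega(u,v)$ times the scalar $v^{m}u^{-m-1}(1-v/u)^{-1}$. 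Expanding the two rational prefactors as geometric series in accordance with the convention of the Remark following Proposition~\ref{comut1}---the first in powers of $u/v$, the second in powers of $v/u$---gives $\sum_{j\ge-m}u^jv^{-j-1}$ and $\sum_{j\le-m-1}u^jv^{-j-1}$ respectively; these half-lines are complementary, so their sum is $\sum_{j\in\bZ}u^jv^{-j-1}=\delta(u,v)$.

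To finish, I would invoke the substitution property $\delta(u,v)\Omega(u,v)=\delta(u,v)\Omega(v,v)$ of the formal delta distribution and evaluate $\Omega(v,v)=\bigl(H(v)E(-v)\bigr)\bigl(E^\perp(-v)H^\perp(v)\bigr)$. Here $H(v)E(-v)=1$ is \eqref{HE}, and $E^\perp(-v)H^\perp(v)=1$ is its image under the algebra homomorphism $f\mapsto f^\perp$; hence $\Omega(v,v)=\mathrm{Id}$ and the anticommutator equals $z^m\delta(u,v)f=\delta(u,v)\,z^mf$, which is \eqref{rel2}. I expect the main obstacle to be precisely the handling of \eqref{rel2}: keeping the grading-dependent exponents correct and, above all, recognizing that the two terms must be expanded in complementary regions so that the two one-sided geometric series assemble into the full two-sided delta, together with the evaluation $\Omega(v,v)=1$.
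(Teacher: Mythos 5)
Your proposal is correct and follows essentially the same route as the paper: normal-ordering the compositions on a graded component $z^mf$ via Proposition~\ref{comut1}, observing the antisymmetry of the scalar prefactors for \eqref{rel1}, and assembling the two one-sided geometric expansions into $\delta(u,v)$ for \eqref{rel2}, finishing with the delta substitution property and $H(v)E(-v)=E^\perp(-v)H^\perp(v)=1$. The only (immaterial) difference is bookkeeping: you let the grading $m$ shift the two half-lines of the series so they are complementary, whereas the paper factors out $v^m/u^m$ first and removes it afterwards with $\delta(u,v)A(v)=\delta(u,v)A(u)$.
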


\begin{proof}
We use Proposition \ref{comut1} to prove  this classical result, thus illustrating  the simplicity of this approach.   
Relations between other vertex operators further in this text are proved along the same lines. 

Using commutation relations of  Proposition \ref{comut1}, for any $f\in \Lambda$,
\begin{align*}
\Phi^+(u)\Phi^+(v) (z^m f)%z^{m+2} u^{-m-2}v^{-m-1} H(u) E^{\perp}(-u)H(v) E^{\perp}(-v)(f)\\
%&=z^{m+2} u^{-m-2}v^{-m-1} \left(1-\frac{u}{v}
%\right)H(u)H(v) E^{\perp}(-u) E^{\perp}(-v)(f)\\
&=z^{m+2} (uv)^{-m-2} \left(v-{u}
\right)H(u)H(v) E^{\perp}(-u) E^{\perp}(-v)(f).
\end{align*}
\begin{align*}
\Phi^-(u)\Phi^-(v)(z^m f)%&=z^{m-2}u^{m-2}v^{m-1}E(-u) H^{\perp}(u)E(-v) H^{\perp}(v)(f)\\
%&=z^{m-2}u^{m-2}v^{m-1}\left(1-\frac{u}{v}
%\right)E(-u)E(-v) H^{\perp}(u) H^{\perp}(v) (f)\\
&=z^{m-2}(uv)^{m-2}\left(v-{u}
\right)E(-u)E(-v) H^{\perp}(u) H^{\perp}(v) (f).
\end{align*}
Changing the roles of $u$  and $v$ in these calculations one gets (\ref{rel1}). 

For (\ref{rel2}) observe that 
\begin{align*}
 \left(1-\frac{u}{v}
\right)\Phi^+(u)\Phi^-(v)(z^m f)
=z^mu^{-m}v^{m-1}H(u)E(-v) E^{\perp}(-u) H^{\perp}(v)(f),
\end{align*}

\begin{align*}
 \left(1-\frac{v}{u}
\right)\Phi^-(v)\Phi^+(u)(z^m f) 
=z^mu^{-m-1}v^{m}H(u)E(-v) E^{\perp}(-u) H^{\perp}(v)(f).
\end{align*}
Note that  by Remark \ref{remark},
\[
 \left(1-\frac{u}{v}
\right)^{-1}\sum_{k\ge 0} \frac{u^{k}}{v^{k+1}},\quad  \left(1-\frac{v}{u}
\right)^{-1}=\sum_{k\ge 0} \frac{v^{k}}{u^{k+1}}.
\]
Then
%Denote as $i_{u/v}  ( F)$  the expansion of rational function $F$ as a series in powers  of $u/v$. Then  one can write
\begin{align*}
&(\Phi^+(u)\Phi^-(v)+\Phi^-(v)\Phi^+(u)) (z^mf)\\
%&= z^m\frac{v^m}{u^{m}}\left(\frac{1}v\, i_{u/v}  \left(1-\frac{u}{v}
%\right)^{-1} + \frac{1}{u}\,  i_{v/u} \left(1-\frac{v}{u}
%\right)^{-1} \right )H(u)E(-v) E^{\perp}(-u) H^{\perp}(v) (f)
%\\
&=z^m\frac{v^m}{u^{m}}\left(
\sum_{k\ge 0} \frac{u^{k}}{v^{k+1}} +\sum_{k\ge 0} \frac{v^{k}}{u^{k+1}}
\right)H(u)E(-v) E^{\perp}(-u) H^{\perp}(v) (f)
\\
&=z^m\frac{v^m}{u^{m}} \,\delta(u, v)H(u)E(-v) E^{\perp}(-u) H^{\perp}(v) (f)
=\delta(u, v)\cdot z^m f.
\end{align*}
We used    (\ref{HE}) along with  the property  of formal delta distribution $
\delta(u, v) A(v)=\delta(u, v) A(u) 
$
for any formal distribution $A(u)$
 (see  e.g. \cite{K}, \cite{Bomb}).
 \end{proof} 
 
Let $1\in \B^{(0)}$ be the constant function.
%%%%%%%%%%%%%%%%
\begin{proposition}\label{v_pres_s}
\[
\Phi^+(u_1)\dots \Phi^+(u_l) \,  (1)= z^l u_1^{-l} \dots u_l^{-1}Q(u_1,\dots, u_l),
\]
where 
\[
Q(u_1,\dots, u_l)=\prod_{1\le i<j\le l}\left(1-\frac{u_i}{u_j}\right)\prod_{i=1}^{l} H(u_i).
\]
\end{proposition}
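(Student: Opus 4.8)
The plan is to prove the identity by induction on the number $l$ of vertex operators, peeling off the leftmost factor and reducing to a product of $l-1$ operators. The engine of the computation is exactly the normal-ordering strategy already rehearsed in the proof of Proposition~\ref{ferm}: I move every annihilation-type factor $E^\perp(-u_i)$ to the right, past the multiplication operators $H(u_j)$, collecting scalar factors along the way, until it reaches the vacuum, where it acts trivially because each $\partial/\partial p_k$ in the exponential (\ref{DHEexp}) kills the constant, so $E^\perp(-u)(1)=1$. The only relation I need from Proposition~\ref{comut1} is its fourth identity with $u$ replaced by $-u$, namely
\[
E^\perp(-u)H(v)=\left(1-\frac{u}{v}\right)H(v)E^\perp(-u).
\]

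For the base case $l=1$ one has $\Phi^+(u_1)(1)=zu_1^{-1}H(u_1)E^\perp(-u_1)(1)=zu_1^{-1}H(u_1)$, which matches the claim since the Vandermonde product in $Q$ is empty. For the inductive step I assume the formula holds for $\Phi^+(u_2)\cdots\Phi^+(u_l)(1)$; by hypothesis this element equals $z^{l-1}u_2^{-(l-1)}\cdots u_l^{-1}\,Q(u_2,\dots,u_l)$ and lies in $\B^{(l-1)}$. The crucial point is the grading: since this intermediate state sits in $\B^{(l-1)}$, the outermost operator acts as $\Phi^+(u_1)|_{\B^{(l-1)}}=zu_1^{-l}H(u_1)E^\perp(-u_1)$, which is precisely how the exponent $-l$ on $u_1$ is produced.

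Pulling the scalar and $z$-factors out front, I am left with $H(u_1)E^\perp(-u_1)\prod_{i=2}^{l}H(u_i)$ acting on the vacuum. Applying the displayed relation $l-1$ times moves $E^\perp(-u_1)$ to the right and generates the factor $\prod_{i=2}^{l}\left(1-\frac{u_1}{u_i}\right)$, after which the surviving $E^\perp(-u_1)$ acts on $1$ as the identity. These new factors are exactly the pairs $(1,i)$ with $i>1$ absent from $Q(u_2,\dots,u_l)$, so together they assemble into $\prod_{1\le i<j\le l}\left(1-\frac{u_i}{u_j}\right)$, while $H(u_1)\prod_{i=2}^{l}H(u_i)=\prod_{i=1}^{l}H(u_i)$ completes $Q(u_1,\dots,u_l)$; tracking the prefactor then gives the staircase $z^{l}u_1^{-l}u_2^{-(l-1)}\cdots u_l^{-1}$ and closes the induction.

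I expect the only genuine subtlety to be this grading bookkeeping: the exponent in $\Phi^+(u)|_{\B^{(m)}}=zu^{-m-1}H(u)E^\perp(-u)$ depends on $m$, which increases by one with each applied operator, so one must check carefully that peeling the factors from the left reproduces the staircase of exponents $u_1^{-l},\dots,u_l^{-1}$ rather than a uniform power. Everything else is the routine commutation already used for Proposition~\ref{ferm}, so no new analytic input is required.
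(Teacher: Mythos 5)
Your proof is correct and is essentially the paper's own argument: the paper also normal-orders the product using the relation $E^\perp(-u)H(v)=(1-u/v)H(v)E^\perp(-u)$ from Proposition~\ref{comut1}, kills the $E^\perp(-u_i)$ factors on the vacuum via $E^\perp(-u_i)(1)=1$, and reads off the staircase $z^l u_1^{-l}\cdots u_l^{-1}$ from the grading action of $R(u)$. The only difference is presentational — the paper does all commutations in one step, while you organize the same computation as an induction on $l$, which makes the grading bookkeeping explicit.
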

\begin{proof}
Using  Proposition \ref{comut1} and that $E^\perp(-u) (1)=1$,
 we write: 
\begin{align*}
\Phi^+(u_1,)\dots \Phi^+(u_l) \,  (1)&= z^l u_1^{-l} \dots u_l^{-1}\prod_{1\le i<j\le l}\left(1-\frac{u_i}{u_j}\right)\prod_{i=1}^{l} H(u_i) E^\perp(-u_i) (1)
\\
&=z^l u_1^{-l} \dots u_l^{-1}Q(u_1,\dots, u_l).
\end{align*}

\end{proof}
 It is known \cite{Md},  \cite{Jing3}  that $Q(u_1,\dots, u_l)$  is the generating function for Schur symmetric functions in the following sense.
Consider  the  series expansion  of the rational function 
 \[
Q(u_1,\dots, u_l)=\sum_{(\alpha_1,\dots, \alpha_l)\in \bZ^l} Q_\lambda \,u_1^{-\lambda_{1}} \dots u_{l}^{-\lambda_{l}}
 \]
 in the region  $|u_1|<\dots <|u_l|$. Then for any partition $\lambda=(\lambda_1\ge \dots \ge \lambda_l\ge 0)$ the coefficient 
of  $u_1^{-\lambda_{1}}\dots  u_{l}^{-\lambda_{l}} $ 
is exactly Schur symmetric  function: $Q_{(\lambda_1,\dots, \lambda_l)}= s_\lambda$.
Thus, Proposition \ref{v_pres_s} describes vertex operator presentation of Schur symmetric functions  (cf. \cite{Jing3}, \cite {Md}, \cite{Zel}).

\section{Generalized fermions and Hall-Littlewood  symmetric functions. }\label{s_twferm}

Let $\lambda$  be a  partition  of length  at most $n$, let $t$ be a parameter.  Hall-Littlewood  polynomials  are defined  by
\begin{align*}
P_\lambda (x_1,\dots, x_n;t)=\left(\prod_{i\ge 0}\prod_{j=1}^{m(i)}\frac{1-t}{1-t^j}\right)\sum_{\sigma \in S_n}
\sigma\left(x_1^{\lambda_1}\dots x_n^{\lambda_n}
\prod_{i < j}\frac{x_i-tx_j}{x_i-x_j}\right),
\end{align*}
where $m(i)$  is  the number  of  parts of the partition $\lambda$  that are  equal to $i$, and
 $S_n$  is the symmetric group of $n$ letters   \cite{Md}. Labeled by partitions Hall-Littlewood polynomials  form a linear basis of  the deformed ring $\Lambda[t]$ of symmetric polynomials 
 with coefficients in $\bC[t]$.

In this section we  
show that  vertex operators presentation of 
Hall-Littlewood polynomials   $P_\lambda$ is obtained  from vertex operators  presentation of Schur symmetric functions $s_\lambda$ by a simple twist of the fields of charged free fermions by multiplication by $E(-u/t)$ or $H(u/t)$.
 This approach significantly simplifies the technical  proofs of  \cite{Jing1}, \cite{Jing2} and  provides new insight into  the original  results of these  papers. 
 
Consider the  deformed  boson Fock space $\B(t)= \oplus \B^{(m)}[t]$, where $ \B^{(m)}[t]= z^m\Lambda[t]$.
We  extend the  action of the operators in Section 2 to   $\B(t)$ by $t$-linearity.
 Define quantum fields of operators  acting on  on $\B(t)$
\begin{align}
\Psi^+(u)&= E(-u/t) \Phi^+(u)\,= uR(u) E(-u/t)H(u) E^{\perp}(-u),\label{phps1}
\\
\Psi^-(u)&= H(u/t) \Phi^-(u)\,= R^{-1}(u)H(u/t)E(-u) H^{\perp}(u).\label{phps2}
\end{align}

\begin{proposition}\label{ferm_twisted}
 Quantum fields $\Psi^\pm(u) $ satisfy  relations of generalized fermions
\begin{align}
\left(1-\frac{ut}{v}\right)\Psi^\pm(u)\Psi^\pm(v)+ \left(1-\frac{vt}{u}\right)\Psi^\pm(v)\Psi^\pm(u)&=0,\label{twist1}
\\
\left(1-\frac{vt}{u}\right)\Psi^+(u)\Psi^-(v)+\left(1-\frac{ut}{v}\right)\ \Psi^-(v)\Psi^+(u)&=\delta(u,v)(1-t)^2.\label{twist2}
\end{align}
\end{proposition}

\begin{proof}
The proof  is based on commutation relations of  Proposition \ref{comut1} and  follows the same  lines as the proof of Proposition \ref{ferm}.
\end{proof}
Proposition \ref{ferm_twisted}  immediately implies  that  operators $\Psi^\pm(u)$  provide vertex operators realization \cite{Jing1} of 
Hall-Littlewood polynomials. Let  
\begin{align*}
F(u_1,\dots, u_l; t)&=\prod_{1\le i<j\le l}\frac{u_j-u_i}{u_j-u_it}\prod_{i=1}^{l} H(u_i)E(-u_i/t),
\end{align*}
where    the  expression  $\prod_{1\le i<j\le l}\frac{u_j-u_i}{u_j-u_it}$  is understood as the series expansion of this rational function 
in the region $|u_1|<\dots <|u_l|$.
Consider the 
 expansion 
 \[
F(u_1,\dots, u_l; t)=\sum_{\lambda_1,\dots \lambda_l\in \bZ} F_\lambda\, u_1^{-\lambda_{1}} \dots u_{l}^{-\lambda_{l}}. 
 \]
 in the region  $|u_1|<\dots < |u_l|$. 
It is proved in    \cite{Jing1}  (see also \cite{Md})  that  for any partition $\lambda=(\lambda_1\ge \dots \ge \lambda_l\ge 0)$ the 
coefficient of  $u_1^{-\lambda_{1}}\dots  u_{l}^{-\lambda_{l}} $ 
is exactly Hall-Littlewood symmetric function: $F_\lambda= P_\lambda (x_1,x_2,\dots ; t)$.
\begin{proposition}\label{HL-vert} (\it{Vertex operator presentation of Hall-Littlewood symmetric functions})
One has
\begin{align}\label{VpHL}
\Psi^+(u_1)\dots \Psi^+(u_l) \,  (1)=  z^lu_1^{-l} \dots u_l^{-1}F(u_1,\dots, u_l; t).
\end{align}
\end{proposition}
\begin{proof} From definition  (\ref{phps1}), (\ref{phps2}) and  commutation relations of  Proposition \ref{comut1}
one immediately finds that 
\begin{align*}
&\Psi^+(u_1)\dots \Psi^+(u_l)(1) \, \\
&= z^l \prod_{1\le i<j\le l}
\left(1-\frac{u_i}{u_j}\right)\left(1-\frac{u_it}{u_j}\right)^{-1} 
 \prod_{i=1}^{l}  u_i^{i-l-1}H(u_i)E(-u_i/t)\prod_{i=1}^{l} E^\perp(-u_i) (1),
 \end{align*}
 which simplifies to  $ z^lu_1^{-l} \dots u_l^{-1}F(u_1,\dots, u_l; t)$ since $E^\perp(-u) (1)=1$.
\end{proof}

\begin{corollary}
Generating functions  $F(u_1,\dots, u_l; t)$ for Hall-Littlewood polynomials  and 
$Q(u_1,\dots, u_l)$ for Schur symmetric functions are related by the formula
\[
F(u_1,\dots, u_l; t)=\prod_{1\le i<j\le l}\left( 1-\frac{tu_i}{u_j}\right)^{-1}\prod_{i=1}^{l} E(-u_i/t)\, Q(u_1,\dots, u_l).
\]
\end{corollary}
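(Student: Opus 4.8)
The plan is to prove the identity by direct substitution, since both generating functions have already been written in explicit ``normal form'' in Propositions \ref{v_pres_s} and \ref{HL-vert}, so no new vertex-operator manipulation is needed. First I would insert the formula
\[
Q(u_1,\dots,u_l)=\prod_{1\le i<j\le l}\left(1-\frac{u_i}{u_j}\right)\prod_{i=1}^{l}H(u_i)
\]
into the right-hand side of the claimed equality. The whole expression then becomes a product of four blocks: the pair factor $\prod_{i<j}(1-tu_i/u_j)^{-1}$, the single-variable block $\prod_{i}E(-u_i/t)$, and the two blocks coming from $Q$, namely $\prod_{i<j}(1-u_i/u_j)$ and $\prod_{i}H(u_i)$.

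Next I would combine the two products over pairs. Collecting $\prod_{i<j}(1-u_i/u_j)$ with $\prod_{i<j}(1-tu_i/u_j)^{-1}$ gives
\[
\prod_{1\le i<j\le l}\frac{1-u_i/u_j}{1-tu_i/u_j}=\prod_{1\le i<j\le l}\frac{u_j-u_i}{u_j-u_it},
\]
which is exactly the pair factor appearing in $F$ in Proposition \ref{HL-vert}. For the single-variable part, the key observation is that $H(u_i)$ and $E(-u_i/t)$ are formal power series with coefficients in $\Lambda[t]$ and hence commute as elements of a commutative ring, so that $\prod_{i}E(-u_i/t)\prod_{i}H(u_i)=\prod_{i}H(u_i)E(-u_i/t)$, matching the remaining factor of $F$. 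Assembling the matched blocks then yields $F(u_1,\dots,u_l;t)$.

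The computation is essentially immediate, so the only point requiring care — the mildest form of an obstacle here — is consistency of expansion regions. Both the factor $\prod_{i<j}(1-u_i/u_j)$ inherited from $Q$ and the factor $\prod_{i<j}(1-tu_i/u_j)^{-1}$ must be expanded in powers of $u_i/u_j$ for $i<j$, i.e.\ in the region $|u_1|<\dots<|u_l|$, which is precisely the region in which $F$ is expanded in Proposition \ref{HL-vert}. Since all rational prefactors are expanded consistently in this region, the identification of the combined pair factor with $\prod_{i<j}(u_j-u_i)/(u_j-u_it)$ holds as an equality of series, and the corollary follows.
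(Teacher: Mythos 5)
Your proof is correct and is precisely the argument the paper intends: the corollary is stated as a ``simple observation'' because it follows immediately from comparing the explicit product formulas for $F(u_1,\dots,u_l;t)$ and $Q(u_1,\dots,u_l)$ in Propositions \ref{HL-vert} and \ref{v_pres_s}, which is exactly your computation. Your added remarks on the commutativity of the series $H(u_i)$, $E(-u_i/t)$ in $\Lambda[t]$ and on the consistency of the expansion region $|u_1|<\dots<|u_l|$ are the right points of care and match the paper's conventions.
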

%%%%%%%%%%%%%%%
\section{Boson-fermion correspondence for Hall-Litllewood polynomials revisited}\label{s_bf}

The classical  boson-fermion correspondence connects   the action of  charged free fermions and the action of  (classical) Heisenberg algebra   with generators
$\{\alpha_m\}_{m\in \bZ}$,  the central element $1$, and  relations (\ref{def_alpha}):
\begin{center}
(I) \quad  Heisenberg algebra $\to$  charged free fermions $\to$ Heisenberg algebra.
\end{center}
In  the same spirit  a deformed boson-fermion correspondence between the actions of generalized fermions and twisted Heisenberg algebra  was established in  \cite{Jing2}:
\begin{center}
 (II) \quad twisted Heisenberg algebra $\to$  generalized  fermions $\to$  twisted Heisenberg algebra.
\end{center}
The twisted Heisenberg algebra is defined as  an algebra with generators  $\{\bold{h}_k\}$, central element $c$, and relations
\begin{align}\label{twist_h}
[\bold{h}_n, \bold h_m]=\frac{m\delta_{m,-n} }{1-t^{|m|}}\cdot c.
\end{align}
This construction found its applications  in \cite{FW},  \cite{Sulk}.  At the same time, it has certain disadvantageous  deviations from the format of the classical  boson-fermion correspondence. In particular,  to obtain the  bosonisation   \cite{Jing2}  of generalized fermions   one has to change the standard definition of normal ordered product of fields. 
 Moreover, the  natural presence of the action of the  classical Heisenberg algebra and the Virasoro  algebra is not reflected by this deformed version. 

In this section we  propose  another deformed construction of boson-fermion correspondence,   different from \cite{Jing2}.   It establishes  connection between the actions of generalized fermions 
and  classical Heisenberg algebra: 
\begin{center}
(III) \quad  Heisenberg algebra $\to$   generalized fermions $\to$ Heisenberg algebra.
\end{center}
Among the advantages  of  the boson-fermion correspondence   (III)  over  (II)  is that 
(a) the standard definition   of the normal ordered product of fields  is used   in all definitions;  (b) the action of the twisted Heisenberg algebra (\ref{twist_h})
 is present  as certain renormalization of the  action of the classical Heisenberg algebra;  (c)  the action of  Virasoro  algebra remains  naturally  in the construction;
 (d)  proofs  of statements of  correspondences (II)  and (III) become  simple implications of the results of classical boson-fermion correspondence (I). 

We recall  the statement of    the classical boson-fermion correspondence in the form convenient  for our exposition. 
For  a formal distribution $a(z)=\sum_{n\in \bZ} a_nz^{-n-1}$ 
define
$
a(z)_+=\sum_{n\le -1} a_nz^{-n-1}, \quad a(z)_-=\sum_{n\ge 0} a_nz^{-n-1}.
$
The normal ordered  product of two formal distributions  is the formal distribution defined  by the formula
$
:a(z) b(z):= a(z)_+ b(z)+ b(z) a_(z)_-
$
(see e.g.  \cite{Bomb} ).

\begin{proposition} \label{BI}(\it{Classical Boson-fermion correspondence (I)}, \cite{DJKM83},\cite{FLM}, \cite{K}, )
\begin{enumerate}[label=\alph*)]
\item 
 Consider the action (\ref{heis1}) of the Heisenberg algebra  on the boson space $\B$. Then
the fields 
\begin{align}
   %  \Phi^+(u)= uR(u)\exp \left(\sum_{n\ge 1}\frac{p_n}{n} \frac{1}{u^n}  \right) \exp \left(-\sum_{n\ge 1}\frac{\partial} {\partial p_n} {u^{n}}\right),
          \Phi^+(u)= uR(u)\exp \left(\sum_{n\ge 1}\alpha_{-n} \frac{1}{u^n}  \right) \exp \left(-\sum_{n\ge 1}\alpha_n {u^{n}}\right),
\label{phi2}\\
   %  \Phi^-(u)=R^{-1}(u)\exp \left(-\sum_{n\ge 1}\frac{p_n}{n} \frac{1}{u^n}  \right) \exp \left(\sum_{n\ge 1}\frac{\partial} {\partial p_n} {u^{n}}\right).
      \Phi^-(u)=R^{-1}(u)\exp \left(-\sum_{n\ge 1}\alpha_{-n} \frac{1}{u^n}  \right) \exp \left(\sum_{n\ge 1}\alpha_n {u^{n}}\right)
\label{phi21}
\end{align}
satisfy  relations  (\ref{rel1}), (\ref{rel2}) and define the action of   charged free fermions on  $\B$.

\item 
  Let $ \Phi^\pm(u)=\sum_{k\in \bZ} \Phi^\pm_{k+{1}/2} u^{\pm k}$ satisfy  relations  (\ref{rel1}), (\ref{rel2}).
Introduce 
\[
\Phi^+(u)\,_+=\sum_{k\ge 1} \Phi^+_{k+1/2}u^{k},\quad \Phi^+(u)\,_-=\sum_{k\le 0} \Phi^+_{k+1/2}u^{k}.
\]
Then coefficients of  the formal distribution 
\[
\alpha(u)=:\Phi^+(u)\cdot\Phi^-(u):\, =\Phi^+(u)\,_+\,\Phi^-(u)-\Phi^-(u)\,\Phi^+(u)\,_-
\]
satisfy the relations (\ref{def_alpha}) of  Heisenberg algebra.
\item  Let $ \Phi^\pm(u)=\sum_{k\in \bZ} \Phi^\pm_{k+{1}/2} u^{\pm k}$ satisfy  relations  (\ref{rel1}), (\ref{rel2}).
 For any $\beta\in \bC$,  the coefficients of the formal distribution 
$
 L^{(\beta)} (u)=\sum_{k\in \bZ} L_k u^{-k-2}
 $
 defined by the formula
 \[
  L^{(\beta)} (u)= \beta :\partial\Phi^+(u)\,\Phi^-(u):+ (1- \beta) :\Phi^+(u)\partial\Phi^-(u):
 \]
satisfy the  relations of  Virasoro  algebra  with central charge $c_\beta =-12\beta^2 +12\beta -2$:
 \[
  [L^{(\beta)} (u), L^{(\beta)} (v)] =\partial_v L(v)\,\delta(u,v) + 2L(v)\, \partial_v\delta(u,v) + \frac{c_\beta}{12}\,\partial^3_v \delta(u,v). 
 \] 
\end{enumerate}
\end{proposition}
\begin{proof}
Given the action (\ref{heis1}) of the Heisenberg algebra, we use presentations (\ref{HEexp}), (\ref {DHEexp})   to compare (\ref{phi2}) and (\ref{phi21}) 
with the definition (\ref{deff1}) and (\ref{deff2}) of $\Phi^{\pm}(u)$ to get  part  a), while for the proofs of b) and c) we refer to \cite{Bomb}.
\end{proof}

%%%%%%%%%%%%%%%
\begin{proposition} (\it{Revisited  version Boson-fermion correspondence (III)})
\begin{enumerate}[label=\alph*)]

\item 
The action of  the twisted Heisenberg algebra  can be defined  by renormalization of the action  of  the classical  Heisenberg algebra:
\begin{align}\label{heis3}
\begin{cases}
\bold{h}_{-n}=\alpha_{-n},  &\quad n\ge 0,\\
\bold{h}_n=\frac{1}{1-t^n}\alpha_{n}& \quad n>0,
\quad c=1.
\end{cases}
\end{align}
\item 
Consider the  action (\ref{heis1})  of  Heisenberg algebra on the boson space $\B  $ expanded to the  deformed space $\B(t)$
by the rule $\alpha_n(t^sf)= t^s\alpha_n(f)$. Then the fields
\begin{align}
    % \Psi^+(u)= R(u)\exp \left(\sum_{n\ge 1}\frac{1-t^n}{n}p_n \frac{1}{u^n}  \right) \exp \left(-\sum_{n\ge 1}\frac{\partial} {\partial p_n} {u^{n}}\right),
      \Psi^+(u)= uR(u)\exp \left(\sum_{n\ge 1}{(1-t^n)}\alpha_{-n} \frac{1}{u^n}  \right) \exp \left(-\sum_{n\ge 1}\alpha_n {u^{n}}\right),
\label{psi2}\\
    % \Psi^-(u)=R^{-1}(u)\exp \left(-\sum_{n\ge 1}\frac{1-t^n}{n} p_n\frac{1}{u^n}  \right) \exp \left(\sum_{n\ge 1}\frac{\partial} {\partial p_n} {u^{n}}\right).
     \Psi^-(u)=R^{-1}(u)\exp \left(-\sum_{n\ge 1}(1-t^n) \alpha_{-n}\frac{1}{u^n}  \right) \exp \left(\sum_{n\ge 1}\alpha_n {u^{n}}\right)
\label{psi21}
\end{align}
satisfy relations   (\ref{twist1}), (\ref{twist2}) and define the action of  generalized  fermions  on $B(t)$.

\item  The other way, consider  the action of  generalized  fermions $\Psi^\pm(u)$ on $B(t)$ defined by 
(\ref{phps1}), (\ref{phps2}).
Then the coefficients  of the normal ordered product 
\[
\alpha (u) =:  H(u/t) \Psi^+(u)\,\cdot\, E(-u/t) \Psi^-(u):
\]
satisfy  relations (\ref{def_alpha}) and define the action of the classical  Heisenberg algebra on $\B(t)$.

\item

Given  the action  (\ref{phps1}), (\ref{phps2}) of  generalized  fermions $\Psi^\pm(u)$ on $\B(t)$,  the coefficients of the formal  distribution 
$
 L^{(\beta)} (u)=\sum_{k\in \bZ} L_k u^{-k-2}
 $
 defined by the formula
  \begin{align*}
  L^{(\beta)} (u)&= \beta : H(u/t)\left( t^{-1}P(-u/t) \Psi^+(u) + \partial \Psi^+(u)\right) \cdot \, E(-u/t)\Psi^-(u):\quad\\
   &+ (1- \beta)  : H(u/t)\Psi^+(u) \cdot  E(-u/t)\left( -t^{-1}P(-u/t) \Psi^-(u) + \partial \Psi^-(u)\right):.
 \end{align*}
 satisfy the  relations of  the  Virasoro  algebra  with central charge $c_\beta =-12\beta^2 +12\beta -2$.
 \end{enumerate}
\end{proposition}

\begin{proof}
 Direct check of  commutation relations proves part  a). 
Given the action (\ref{heis1}) of the Heisenberg algebra we use presentations (\ref{HEexp}), (\ref {DHEexp})   to compare (\ref{psi2}), (\ref{psi21}) 
with  definition (\ref{phps1}), (\ref{phps2}) of $\Psi^{\pm}(u)$ to obtain part b).
 Note  that from part a) and  (\ref{phps1}), (\ref{phps2}) quantum 
fields 
\begin{align}\label{subst}
\Phi^+(u)= H(u/t) \Psi^+(u),\quad 
\Phi^-(u)= E(-u/t) \Phi^-(u)
\end{align}
satisfy  relations of charged free  fermions. Then   part c)  follows from  Proposition \ref{BI} part b). 

Part d) 
follows from the substitution of   (\ref{subst})  into  the formula of $L^{(\beta)}(u)$ in Proposition \ref{BI} part c)
and  the property \cite{Md}
$$
P(u)= \partial H(u)/H(u), \quad P(-u)= \partial E(u)/E(u).$$

\end{proof}

\section{ Tau-functions of the KP hierarchy in $\Lambda[t]$ }\label{s_tau}
%%%%%%%%%%%%%%%%%%%%%%%%%%%%

The ring $\Lambda[t]$ possesses a scalar product
$\<\cdot ,\cdot \>_t$, which is a deformation of the scalar product (\ref{scalar})  on $\Lambda$.  
Following  III.4 in \cite{Md}, define  symmetric functions  $S_{\lambda}= 
S_{\lambda}(x_1,x_2,\dots ; t) $  as a
basis dual  to classical Schur functions $s_\lambda =s_\lambda(x_1,x_2,\dots)$ with respect to  the deformed scalar product: 
 \[
 \<S_\lambda,s_\mu\>_t=\delta_{\lambda, \mu}.
 \]

\begin{proposition}
Let
$$
S(u_1,\dots, u_l)= \prod_{i<j}{\left(1-\frac{u_i}{u_j}\right)}\prod_{i=1}^{l}H(u_i) E(-u_i/t).
$$
Then  for any partition $\lambda$, the coefficient of $u_1^{-\lambda_1}\dots u_l^{-\lambda_l}$
is  $S_\lambda$.

\end{proposition}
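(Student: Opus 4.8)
The plan is to reduce the whole statement to a single algebraic endomorphism of $\Lambda$. I would introduce the $\bC[t]$-algebra homomorphism $\phi\colon \Lambda\to\Lambda[t]$ determined on generators by $\phi(p_n)=(1-t^n)p_n$, and then establish two facts: first, that the series $S(u_1,\dots,u_l)$ is obtained from the Schur series $Q(u_1,\dots,u_l)$ of Proposition \ref{v_pres_s} by applying $\phi$ coefficient-wise, so that $S_\lambda=\phi(s_\lambda)$; second, that $\phi$ intertwines the two scalar products in the sense $\langle \phi(f),g\rangle_t=\langle f,g\rangle$ for all $f,g\in\Lambda$, where $\langle\cdot,\cdot\rangle$ is the undeformed product under which the $s_\lambda$ are orthonormal. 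Granting these, $\langle S_\lambda,s_\mu\rangle_t=\langle\phi(s_\lambda),s_\mu\rangle_t=\langle s_\lambda,s_\mu\rangle=\delta_{\lambda,\mu}$, which is exactly the defining duality property and forces $\phi(s_\lambda)=S_\lambda$.

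For the first fact the key computation is the factorization of the one-variable factor. Using (\ref{HEexp}) and substituting $-u/t$ into $E$, one gets $E(-u/t)=\exp\bigl(-\sum_{n\ge 1}\tfrac{t^n p_n}{n}u^{-n}\bigr)$, whence $H(u)E(-u/t)=\exp\bigl(\sum_{n\ge 1}\tfrac{(1-t^n)p_n}{n}u^{-n}\bigr)=\phi(H(u))$. Since $\phi$ acts only on the symmetric-function coefficients and fixes both the variables $u_i$ and the scalar prefactor $\prod_{i<j}(1-u_i/u_j)$, applying $\phi$ to $Q(u_1,\dots,u_l)=\prod_{i<j}(1-u_i/u_j)\prod_i H(u_i)$ produces precisely $S(u_1,\dots,u_l)$. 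As $\phi$ commutes with extracting the coefficient of $u_1^{-\lambda_1}\cdots u_l^{-\lambda_l}$, Proposition \ref{v_pres_s} then yields $S_\lambda=\phi(s_\lambda)$.

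For the second fact I would invoke the explicit form of the deformed scalar product on the power-sum basis (III.4 in \cite{Md}), namely $\langle p_\lambda,p_\mu\rangle_t=\delta_{\lambda,\mu}\,z_\lambda\prod_i(1-t^{\lambda_i})^{-1}$, where $z_\lambda$ is the normalizing constant for which $\langle p_\lambda,p_\mu\rangle=\delta_{\lambda,\mu}z_\lambda$ in the undeformed product. Since $\phi(p_\lambda)=\prod_i(1-t^{\lambda_i})\,p_\lambda$, the factors $\prod_i(1-t^{\lambda_i})$ and $\prod_i(1-t^{\lambda_i})^{-1}$ cancel, giving $\langle\phi(p_\lambda),p_\mu\rangle_t=\delta_{\lambda,\mu}z_\lambda=\langle p_\lambda,p_\mu\rangle$, and bilinearity extends this to all of $\Lambda$.

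The routine parts, the one-variable exponential identity and the cancellation on power sums, are immediate, so the main conceptual obstacle is essentially just recognizing $\phi$ and checking that it is the correct intertwiner; this in turn hinges on pinning down the normalization of $\langle\cdot,\cdot\rangle_t$ from \cite{Md}, III.4. As a cross-check, or an alternative route bypassing $\phi$, one can instead verify the Cauchy identity $\sum_\lambda S_\lambda(x)\,s_\lambda(y)=\prod_{i,j}\frac{1-t x_i y_j}{1-x_i y_j}$, whose right-hand side equals $\exp\bigl(\sum_{n\ge 1}\tfrac{1-t^n}{n}p_n(x)p_n(y)\bigr)$ and is precisely the reproducing kernel of $\langle\cdot,\cdot\rangle_t$; duality of $\{S_\lambda\}$ to $\{s_\mu\}$ is equivalent to this identity.
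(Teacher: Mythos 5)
Your proof is correct, but it takes a genuinely different route from the paper's. The paper quotes Macdonald's determinantal formula $S_\lambda=\det[(1-t)P_{(\lambda_i-i+j)}]$ (III.4 (4.3) together with III.2 (2.10)) and then proves the proposition by a direct Vandermonde expansion: writing $\prod_{i<j}(1-u_i/u_j)=\det[u_i^{-i+j}]$, multiplying into $\prod_i S(u_i;t)$, and reading off the coefficient of $u_1^{-\lambda_1}\cdots u_l^{-\lambda_l}$ as exactly that determinant. You instead work through the substitution homomorphism $\phi\colon p_n\mapsto(1-t^n)p_n$: you show $S(u_1,\dots,u_l)=\phi\bigl(Q(u_1,\dots,u_l)\bigr)$ coefficient-wise, and you identify $\phi(s_\lambda)=S_\lambda$ by proving the intertwining property $\langle\phi(f),g\rangle_t=\langle f,g\rangle$ from Macdonald's power-sum formula $\langle p_\lambda,p_\mu\rangle_t=\delta_{\lambda\mu}z_\lambda\prod_i(1-t^{\lambda_i})^{-1}$. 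Both arguments lean on facts from Macdonald III.4 (the paper on the determinantal formula, you on the definition of the deformed form), and yours additionally needs the identification $Q_\lambda=s_\lambda$ (stated after Proposition \ref{v_pres_s}) and the nondegeneracy of $\langle\cdot,\cdot\rangle_t$, which you should flag explicitly for the step where duality ``forces'' $\phi(s_\lambda)=S_\lambda$ (it holds since the form is diagonal with nonzero entries on the $p_\lambda$ basis). What your route buys is significant: the relation $S_\lambda=\phi(s_\lambda)$, i.e.\ that $S_\lambda$ is obtained from $s_\lambda$ by $p_n\to(1-t^n)p_n$, is precisely the observation the paper needs later in Section \ref{s_tau} to show the $S_\lambda$ are $\tau$-functions of KP, and there the paper derives it as a corollary of this proposition plus the Jacobi--Trudi formulas; your argument establishes it up front, so the proposition and the later substitution principle come out of a single computation. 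The paper's route, by contrast, is more self-contained at this spot and avoids invoking the scalar product's explicit power-sum normalization.
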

\begin{proof}
By  III.4  formula (4.3) and III.2  formula (2.10) in  \cite{Md}, 
 $S_\lambda$  can be expressed through Hall-Littlewood polynomials  $P_{(k)}$ by an analogue of Jacobi\,-\,Trudi formula: 
\begin{align}\label{Sq}
S_\lambda  =\det [(1-t)P_{(\lambda_i-i+j)}].
\end{align}
Here  $P_{(k)}= P_{(k)}(x_1,x_2,\dots ;t)$  is  the coefficient of the expansion
 \[
 S(u;t)= H(u) E(-u/t)=(1-t) \sum_{k=0}^{\infty} P_{(k)}\frac{1}{u^k}.
 \]
Then
\begin{align*}
S(u_1,\dots, u_l)&= \prod_{i<j}{\left(1-\frac{u_i}{u_j}\right)}\prod_{i=1}^{l}S(u_i; t) \\
&
= \det[u_i^{-i+j}] \prod_i S(u_i; t)
= \det [u_i^{-i+j}S(u_i;t) ]\\ 
&=(1-t)^l\sum_{\sigma\in S_l} sgn(\sigma) \sum_{a_1\dots a_l} P_{(a_1)}u_1^{-a_1-1+\sigma(1)}\dots P_{(a_l)}u_l^{-a_l-l+\sigma(l)}\notag\\
&=(1-t)^l
\sum_{\lambda_1\dots \lambda_l} \sum_{\sigma\in S_l}sgn(\sigma)P_{(\lambda_1-1+\sigma(1))}\dots P_{(\lambda_l-l+\sigma(l))}u_1^{-\lambda_1}\dots u_l^{-\lambda_l}\notag
\\
&=\sum_{\lambda_1\dots \lambda_l}\det [(1-t)P_{(\lambda_i-i+j)}]u_1^{-\lambda_1}\dots u_l^{-\lambda_l}.
\end{align*}
\end{proof}
\begin{corollary} Generating functions  $S(u_1,\dots, u_l)$ for symmetric functions $S_\lambda$ and
$Q(u_1,\dots, u_l)$ for   Schur symmetric functions $s_\lambda$   are related by the formula
\begin{align*}
S(u_1,\dots, u_l)&=E(-u_l/t)\dots E(-u_1/t) Q(u_1,\dots u_l).
\end{align*}
The vertex operator presentation of  the generating function $S(u_1,\dots, u_l)$  can be written as
\begin{align*}
z^l u_1^{-l}\dots u_l^{-1} S(u_1,\dots, u_l)&=
E(-u_l/t)\dots E(-u_1/t)
\Phi^+(u_1)\dots \Phi^+(u_l) \,  (1).
\end{align*}
\end{corollary}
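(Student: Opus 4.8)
The plan is to deduce both identities directly from the vertex operator presentation of Schur functions in Proposition \ref{v_pres_s}, together with the explicit forms of the generating functions $S$ and $Q$, using throughout the fact that each $E(-u_i/t)$ acts on $\B[t]$ as an operator of multiplication within the commutative ring $\Lambda[t]$.

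First I would prove the relation between the generating functions themselves. Comparing the definitions
\[
S(u_1,\dots,u_l)=\prod_{i<j}\left(1-\frac{u_i}{u_j}\right)\prod_{i=1}^l H(u_i)E(-u_i/t), \qquad Q(u_1,\dots,u_l)=\prod_{i<j}\left(1-\frac{u_i}{u_j}\right)\prod_{i=1}^l H(u_i),
\]
the two differ only by the factor $\prod_{i=1}^l E(-u_i/t)$. Since the $H(u_i)$ and the $E(-u_j/t)$ are all operators of multiplication by generating functions of symmetric functions, they act inside the commutative ring $\Lambda[t]$ and hence commute with one another; in particular the order in which the factors $E(-u_i/t)$ are applied is immaterial. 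Therefore
\[
E(-u_l/t)\dots E(-u_1/t)\,Q(u_1,\dots,u_l)=\left(\prod_{i=1}^l E(-u_i/t)\right)Q(u_1,\dots,u_l)=S(u_1,\dots,u_l),
\]
which is the first assertion.

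Next I would apply the operator $E(-u_l/t)\dots E(-u_1/t)$ to both sides of the identity of Proposition \ref{v_pres_s},
\[
\Phi^+(u_1)\dots\Phi^+(u_l)\,(1)=z^l u_1^{-l}\dots u_l^{-1}Q(u_1,\dots,u_l).
\]
The crucial observation is that each $E(-u_i/t)$ is a $t$-linear multiplication operator acting only on the $\Lambda[t]$-factor of $\B[t]$: it preserves the $z$-grading and commutes both with multiplication by $z^l$ and with the scalar monomials $u_1^{-l}\dots u_l^{-1}$ in the formal variables. Pulling this prefactor through, the right-hand side becomes $z^l u_1^{-l}\dots u_l^{-1}\,E(-u_l/t)\dots E(-u_1/t)\,Q(u_1,\dots,u_l)$, and by the first part this equals $z^l u_1^{-l}\dots u_l^{-1}S(u_1,\dots,u_l)$, giving the second identity.

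The argument is essentially bookkeeping, so I do not expect a genuine obstacle. The only point requiring care is to confirm that each $E(-u_i/t)$ really does commute with the grading shift $z^l$ produced by the operators $R(u_i)$ inside the $\Phi^+(u_i)$ and with the scalar powers of the $u_j$. Because $E(-u_i/t)$ acts purely within $\Lambda[t]$ and affects neither $z$ nor the central formal parameters $u_j$, this commutation is immediate, and for the same reason the ordering of the $E$-factors may be chosen freely.
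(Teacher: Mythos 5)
Your proposal is correct and follows essentially the same route the paper intends: the paper states this as an immediate corollary of the proposition giving $S(u_1,\dots,u_l)=\prod_{i<j}\left(1-\frac{u_i}{u_j}\right)\prod_i H(u_i)E(-u_i/t)$ together with Proposition \ref{v_pres_s}, and your argument just spells out the bookkeeping (commutativity of the multiplication operators $E(-u_i/t)$ within $\Lambda[t]$ and their commuting with $z^l$ and the scalar powers of the $u_j$). No gaps; the only point worth noting is exactly the one you flag, namely that the $E(-u_i/t)$ sit to the left of all the $\Phi^+(u_j)$, so unlike in the Hall-Littlewood case no extra factors $\left(1-\frac{tu_i}{u_j}\right)^{-1}$ arise from commuting them past $E^\perp$ operators.
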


In \cite{FW}  the relation of symmetric functions  $S_\lambda$ to tau-functions of the KP hierarchy  is discussed. The  determinant-type property (\ref{Sq})
of symmetric functions   $S_\lambda$  is interpreted as 
 pl\" ucker coordinates-type property. This observation allows the authors  of   \cite{FW} to  conclude  that the expression   $\sum_{\lambda}s_\lambda S_\lambda$    is an example of   a  tau-function  of the KP hierarchy in $\Lambda[t]$.

Here we show that   symmetric functions  $S_\lambda$'s themselves are  tau-functions  of the KP hierarchy;  this result is not present in   \cite{FW} and does not follow from the mentioned above statement. 
Moreover, we provide  the explicit  formula for the  charged free fermions  action that realize   the KP hierarchy  for these tau-functions and conclude with one more vertex operator presentation of  the generating function  of $S_\lambda$'s.

As it is well-known, the  bilinear form of the KP hierarchy is the  equation 
\[
\Omega (\tau \otimes \tau)=0,
\]
where $\tau\in \B^{(0)}=\Lambda= \bC[p_1,p_2,\dots]$ and
\begin{align}\label{KP}
\Omega=\operatorname{Res}_{u=0}\left( \frac{1}{u}\Phi^+(u)\otimes \Phi^-(u)\right).
\end{align}
It is known  that Schur symmetric functions  $s_\lambda\in \B^{(0)}$ are solutions of the KP hierarchy  \cite{DJKM83}, \cite{JimMiw1}, \cite{JimMiw2}. 
Define  formally
\begin{align}
     \Phi_t^+(u)&= uR(u)H(u) E(-u/t)\prod_{i=0}^{\infty}E^\perp(-u/t^i),\label{phit1}\\
      \Phi_t^-(u)&= R(u)^{-1}E(-u) H(u/t)\prod_{i=0}^{\infty}H^\perp(u/t^i).\label{phit2}
\end{align}
Consder 
\begin{align}\label{omt2}
\Omega_t=\operatorname{Res}_{u=0}\left( \frac{1}{u}\Phi_t^+(u)\otimes \Phi_t^-(u)\right),
\end{align}
 and  the equation 
 \begin{align}\label{omt1}
\Omega_t (\tau \otimes \tau )=0, 
\end{align}
where $\tau\in \B^{(0)}(t)$.
We summarize the statements in the following proposition. 
\begin{proposition}
\begin{enumerate}[label=\alph*)]
\item  
Let $\Phi_t^\pm(u)$ be defined by  (\ref{phit1}), (\ref{phit2}) with the expansion in the region $|t|<1$.
  Operators $\Phi_t^\pm(u)$
satisfy exactly the same relations as classical charged free  fermions  $\Phi^\pm(u)$ in Proposition  \ref{ferm}.
 Thus, operators $\Phi_t^\pm(u)$ provide the action of  charged free fermions  on  the deformed space $\B(t)$,
Consequently, the equation  (\ref{omt1}) is  the bilinear identity of the KP hierarchy on functions $\tau \in \B(t)$.

\item   Symmetric functions $S_\lambda$  are solutions of the bilinear identity (\ref{omt1}).
 Consequently,  symmetric functions $S_\lambda$  are tau-functions of the KP hierarchy.
\item  Generating function $S(u_1,\dots, u_l)$ of symmetric functions  $S_\lambda$'s
has  the second   vertex operator presentation:
 \[
\Phi_t^+(u_1)\dots \Phi_t^+(u_l) \,  (1)= z^l u_1^{-l} \dots u_l^{-1}S(u_1,\dots, u_l).
\]
\end{enumerate}
\end{proposition}
\begin{proof}
Schur symmetric functions  $s_\lambda$ are expressed through complete symmetric functions by Jacobi\,-\,Trudi formula
$
s_\lambda= \det[h_{\lambda_i -i+j}] $, while 
 complete symmetric functions   $h_k$ in the determinant can be expressed as   polynomial functions of power sums $h_k=h_k(p_1, p_2,\dots )$
through the relation 
\[
\sum_{k=0}^{\infty} \frac{h_k }{u^k}= H(u)= exp\left(\sum_{n\ge1} \frac{p_n}{n}\frac{1}{u^n}\right).
\]

At the same time, from   (\ref{Sq}) the  dual symmetric functions $S_\lambda$ are given by 
$S_\lambda= \det[q_{\lambda_i -i+j}]$,
where symmetric functions 
$q_k= (1-t) P_{(k)}$ are expressed  as functions  of power sums  $q_k=q_k(p_1, p_2,\dots )$  through the relation
\[
\sum_{k=0}^{\infty} \frac{q_k }{u_k}=H(u) E(-u/t)= exp\left(\sum_{n\ge1} \left(1-t^{n}\right) \frac{p_n}{n}\frac{1}{u^n}\right).
\]
Thus, $S_\lambda$ as  a function of  power sums $(p_1, p_2,\dots )$   can be obtained from  $s_\lambda$ by the substitution   of variables $p_n\to (1-t^n)p_n$.

Recall  that Schur symmetric functions  $s_\lambda$ are solutions of the bilinear  KP identity (\ref{KP}). 
  $\Phi^\pm(u)$ in (\ref{KP})  in terms of  operators $p_i$'s  and $\partial/\partial  p_i$'s have  the form (\ref{phi2}), (\ref{phi21}).
Hence
  $S_\lambda$ satisfies  the bilinear identity
\begin{align*}
\Omega_t (S_\lambda \otimes S_\lambda)=0, 
\end{align*}
where
\begin{align}\label{omt2}
\Omega_t=\operatorname{Res}_{u=0}\left( \frac{1}{u}\Phi_t^+(u)\otimes \Phi_t^-(u)\right),
\end{align}
and  $\Phi^\pm_t(u)$  is obtained from (\ref{phi2}), (\ref{phi21}) by the same substitution $p_n\to (1-t^n)p_n$:
\begin{align*}
     \Phi_t^+(u)= uR(u) \exp \left(\sum_{n\ge 1}\frac{(1-t^n)p_n}{n} \frac{1}{u^n}  \right) \exp \left(\sum_{n\ge 1}\frac{1}{(1-t^n)}\frac{\partial} {\partial p_n} {u^{n}}\right).\\
         \Phi_t^-(u)= R(u)^{-1} \exp \left(-\sum_{n\ge 1}\frac{(1-t^n)p_n}{n} \frac{1}{u^n}  \right) \exp \left(\sum_{n\ge 1}\frac{1}{(1-t^n)}\frac{\partial} {\partial p_n} {u^{n}}\right).
\end{align*}
Using the geometric series expansion $1/(1-t^n)=\sum_{i} (t^i)^n$  in the region $|t|<1$ we can write the second exponential factor as 
\begin{align*}
\exp \left(\mp\sum_{n\ge 1}\frac{1}{(1-t^n)}\frac{\partial} {\partial p_n} {u^{n}}\right)=
%\exp \left(\mp\sum_{i=0}^\infty \sum_{n\ge 1}\frac{\partial} {\partial p_n} {(t^iu)^{n}}\right)=
\prod_{i=0}^{\infty}\exp \left(\mp \sum_{n\ge 1}\frac{\partial} {\partial p_n} {(t^iu)^{n}}\right)
=\prod_{i=0}^{\infty}E^\perp(-u/t^i)
\end{align*}
Note that 
\[
\exp \left(\sum_{n\ge 1}\frac{(1-t^n)p_n}{n} \frac{1}{u^n}  \right)= H(u) E(-u/t),%\quad \exp \left(-\sum_{n\ge 1}\frac{(1-t^n)p_n}{n} \frac{1}{u^n}  \right)= E(−u)H(u/t).
\]
which gives (\ref{phit1}) for $\Phi_t^+(u)$, and similarly (\ref{phit2}) for $ \Phi_t^-(u)$.

Applying the same substitution   $p_n\to (1-t^n)p_n$  
to the result of Proposition  \ref{v_pres_s} we get c).

\end{proof}

\bigskip

\end{document}